\newtheorem{theorem}{Theorem}[section]
\newtheorem{prop}{Proposition}
\DeclareMathOperator*{\argmax}{argmax}
\title{Mean-Field Game Analysis of SIR Model with Social Distancing}
\date{\today}
\author{\small Samuel Cho}
\affil{\footnotesize Program for Quantitative and Computational Biology, Princeton University}
\begin{document}
\maketitle
\begin{abstract}
The current COVID-19 pandemic has proven that proper control and prevention of infectious disease require creating and enforcing the appropriate public policies. One critical policy imposed by the policymakers is encouraging the population to practice social distancing (i.e. controlling the contact rate among the population). Here we pose a mean-field game model of individuals each choosing a dynamic strategy of making contacts, given the trade-off of gaining utility but also risking infection from additional contacts. We compute and compare the mean-field equilibrium (MFE) strategy, which assumes each individual acting selfishly to maximize its own utility, to the socially optimal strategy, which maximizes the total utility of the population. We prove that the optimal decision of the infected is always to make more contacts than the level at which it would be socially optimal, which reinforces the important role of public policy to reduce contacts of the infected (e.g. quarantining, sick paid leave). Additionally, we include cost to incentivize people to change strategies, when computing the socially optimal strategies. We find that with this cost, policies reducing contacts of the infected should be further enforced after the peak of the epidemic has passed. Lastly, we compute the price of anarchy (PoA) of this system, to understand the conditions under which large discrepancies between the MFE and socially optimal strategies arise, which is when intervening public policy would be most effective. 
\end{abstract}

\section{Introduction}
The current COVID-19 pandemic has been evidence that posing a public policy to fight such a crisis is an extremely difficult and interdisciplinary task. A major component of such policies is urging people to practice social distancing. By reducing interpersonal contacts, the spread of infection can be slowed down. However, it is uncertain how to incentivize people to practice social distancing, when there are clearly numerous benefits to making contacts, such as working for income, general desire for freedom and social relationships.\\
\\
Given this trade-off between the additional utility from making more contacts, and the additional chance of infection from those contacts, we will compute and compare 1) the selfish strategies, in which individuals make contacts to optimize only their own utilities and 2) socially optimal strategies, in which individuals make contacts which optimize the utility of the total population. \\
\\
The approach of focusing on the economic causes and epidemiological consequences is referred to as economic epidemiology\cite{perrings2014merging}. Recent work has focused on treating economic factors behind contact and mixing decisions as part of the disease transmission mechanism\cite{gersovitz2003infectious,gersovitz2004economical,barrett2007optimal,funk2009spread,funk2010modelling}. More specifically, some previous work in this field has explicitly included contact behaviors as control variables into the classical SIR model and have shown that different dynamics can emerge with adaptive behavior\cite{reluga2010game,fenichel2011adaptive,morin2013sir}. Disease dynamics has also been studied with game theory, focusing mostly on steady-state problems or those related to vaccination\cite{francis2004optimal,reluga2009sis,doncel2017mean}. Optimal control theory has also been used to study policy interventions on infectious disease dynamics\cite{lenhart2007optimal,neilan2010introduction}. Our work here builds on these previous models to pose a mean-field game problem of social distancing, explicitly modeling the feedback between the individuals and the population structure. The individual, which is susceptible (S), infected (I), or recovered (R), each chooses a dynamic contact strategy to maximize its accumulated utility over the time period.\\
\\
Mean-field games is a recently developed field of mathematics, studying dynamic game-theoretic problems of infinite number of players\cite{lasry2007mean,huang2006large}. Our model is a relatively simple formulation of the mean-field games, in which we have a deterministic game with 3 discrete states (S,I,R) in continuous time. Depending on the number of contacts each individual makes, the dynamics of the infectious population follow the SIR model, and the anticipation of this population structure influences the computation of an individual's optimal strategy. We take mean-field assumptions such that 1) the number of individuals in the population is large $(N\to\infty)$, 2) the individuals are homogeneous within each compartment and the population is well-mixed, and 3) individuals engage in symmetric interactions.
\section{Model}
\subsection{SIR dynamics}
The focus of this work is on understanding the effect of social distancing behavior and the corresponding utility trade-offs on the epidemiological dynamics, so we will keep the epidemiology as simple as possible as a baseline model. We will devote a section to including an exposed class (SEIR), but leave further expansion on the epidemiology as future work. \\
\\
We consider an SIR model without births or deaths. The recovered (R) is an absorbing state, where individuals gain permanent immunity from the disease. For now, we do not consider exposed period or asymptomatic or presymptomatic transmissions.\\
\\
Without birth or death, the population is fixed, and we let $x_z$ be the fraction of the population in compartment $z \in \{S,I,R\}$. Then the dynamics of the SIR system is
\begin{align}[left=\empheqlbrace] 
\Dot{x}_S &= -C(\cdot)\beta x_Sx_I \nonumber\\
\Dot{x}_I &= C(\cdot)\beta x_S x_I - \mu x_I \nonumber\\
\Dot{x}_R &= \mu x_I \label{eq:1}
\end{align}
$\beta$ is the likelihood that infection happens given a contact between S and I, and $\mu$ is the constant rate of recovery from infection. $C(\cdot)$ is the rate that an S individual and an I individual make contact, which is dependent on whether individuals choose to socially distance or not. Let us write the number of contacts made by a $z \in \{S,I,R\}$ individual at time $t$ as $c_z(t)$. \\
\\
It is evident that $C(\cdot)$ should increase as $c_S(t)$ and $c_I(t)$ increase, but there are two intuitive ways of writing an expression for $C(\cdot)$, analogous to frequency-dependent or density-dependent disease transmission.\\
\\
i) Frequency-dependent\\
Looking only at the transmission term (suppressing notation for dependence on $t$ for brevity),
\begin{equation}
\Dot{x}_S = -\beta\frac{c_Sx_Sc_Ix_I}{c_Sx_S + c_Ix_I + c_Rx_R}
\end{equation}
where a contact by S has probability $\frac{c_Ix_I}{c_Sx_S+c_Ix_I+c_Rx_R}$ of being with I. Therefore, the transmission term depends on the relative frequency between the contact strategies.\\
\\
ii) Density-dependent \\
Similarly the transmission term is
\begin{equation}
\Dot{x}_S = -\beta'c_Sx_Sc_Ix_I
\end{equation}
Here, the transmission term depends on not only the relative frequency, but also the absolute density of the contact strategies. If the whole population is doubling or halving contacts, the transmission should increase or decrease accordingly, and this density-dependent formulation is consistent with such intuitions. For the rest of this paper, we will take the density-dependent formulation. (Note that $\beta'$ term here is technically different from the $\beta$ term above, but for simplicity, any $\beta$ referred to for the rest of this paper is $\beta'$ of the density-dependent formulation.)

\subsection{Utility of contacts}
$c_z(t)$ is the contact strategy which is explicitly chosen by an individual. Large $c_z(t)$ means going out to work and socializing, while smaller $c_z(t)$ means refraining from those activities (i.e. social distancing). $u_z(c_z)$ denotes the utility received by a $z$ individual from making $c_z$ contacts, which is a combination of economic gains and personal well-being. The realistic conditions that we impose are the following: Starting from zero contacts, increasing the number of contacts results in increased utility, which is the baseline universal need for social interactions, as well as monetary gain from economic activities. We assume that the marginal increase in utility decreases with more contacts, until eventually, more contacts become detrimental. \\
\\
Therefore, the appropriate functional form of $u_z(c_z)$ must be a concave function with an interior maximum. One appropriate function form with interpretations of the parameters is $u_z(c_z)= (b_zc_z - c_z^2)^\gamma - a_z$ where $c_z \in [0,b_z]$\cite{fenichel2011adaptive}. $b_z$ is a parameter of how the disease impacts the marginal economic productivity of the individual. Large $b_z$ means that the individual has the choice to make more contacts, as well as receive higher marginal utility from making contacts. $a_z$ is a parameter of the baseline cost of being infected, such as an individual's general propensity to be healthy. These two parameters decompose the effect of the disease into economic cost and health cost. Some assumptions on these parameters are that $b_S=b_R>b_I$ and $a_S=a_R < a_I$, since the infected become impaired in economic productivity, as well as suffer the health cost compared to the S and R. $\gamma$ changes the concave shape of the function, and $\gamma \in (0,1]$ ensures that $u_z(c_z)$ is concave everywhere in the domain. \\
\\
From this utility function, we see that each individual has some optimal level of social contacts. The utility of a $z$ individual is maximized at $\frac{b_z}{2}$, which is each individual's optimal contact strategy in the absence of adaptive behavior in response to the risk of infectious disease.

\begin{figure}[h!]
\centering
\includegraphics[scale=0.5]{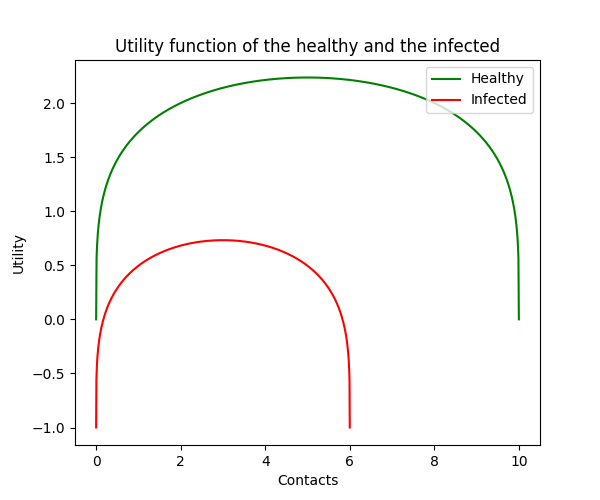} 
\caption{$u_z(c_z)$ is shown where $z \in \{S,R\}$ are healthy (green) and $z \in \{I\}$ is infected (red) and the parameters are $b_S=b_R=10$, $b_I=6$, $a_S=a_R=0$, $a_I=1$, and $\gamma=0.25$. A healthy individual gains utility from making more contacts, but eventually does not want to make more, when $c_S=5$. The infected suffers the baseline cost, but also gains some utility from making contacts, although at a lower rate compared to the healthy.}
\label{fig:util_fig}
\end{figure}

\subsection{Value function}
Over time period, $t \in [0, T]$, the total utility of an individual is the sum of utility gained at each time point. For example, a recovered individual with continuous contact decision $c(t)$ receives total utility $\int_0^T u_{R}(c(t))dt$. Typically, a discounting term, $\delta$ which discounts the future utility compared to the present utility, is included, but we set $\delta=1$ for simplicity. The general results here do not change with $\delta < 1$. \\
\\
Let us define $V_R(t)$ to be the total future utility expected by the R individual at time $t$ until $T$. From this formulation, it follows that terminal condition is $V_R(T)=0$, and 
\begin{equation}
V_R(t)=\int_t^T u_{R}(c(t))dt = u_{R}(c(t))dt + \int_{t+dt}^T u_{R}(c(t))dt = u_{R}(c(t))dt + V_R(t+dt)
\end{equation}
Similarly for S and I individuals, we define $V_S(t)$ and $V_I(t)$, which depend on the rate at which the individuals move between the SIR states. (Fig. \ref{fig:prob_fig})\\
\begin{figure}[h!]
\centering
\includegraphics[scale=0.3]{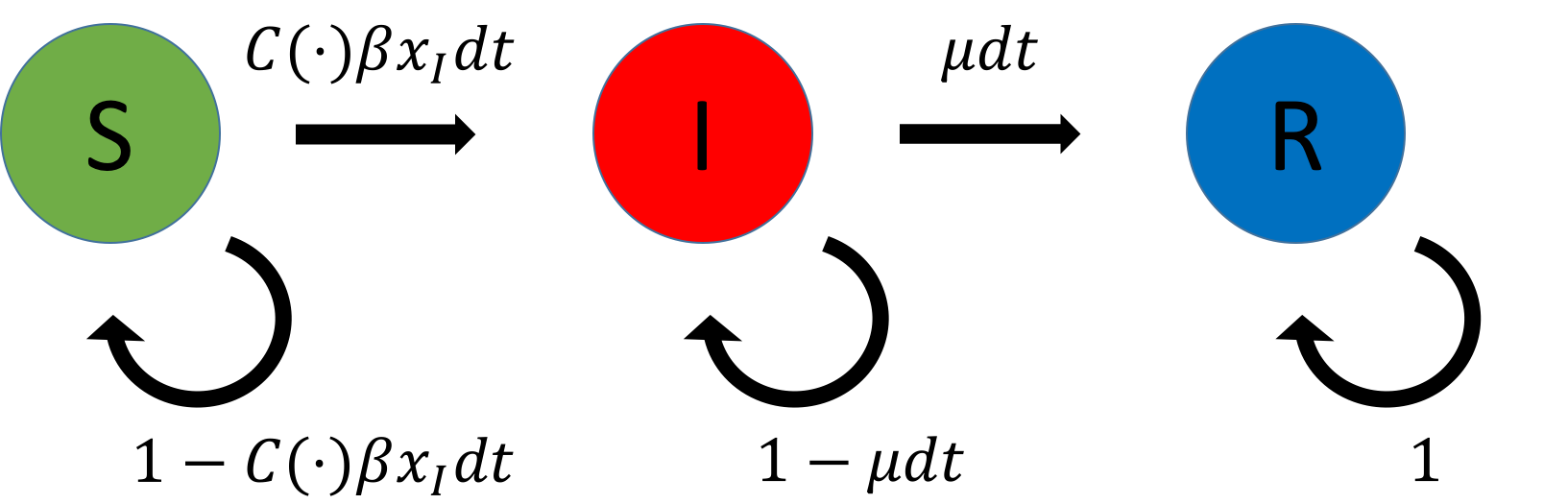} 
\caption{Between time $t$ and $t+dt$, S, I, and R individuals move between the states at these rates. The S individual becomes infected at a rate dependent on contact rates. The I individual recovers at a constant rate of $\mu dt$. The recovered individual remains in the state.}
\label{fig:prob_fig}
\end{figure}
\\
From these transition rates, we can write the Bellman equations, which give the value functions for individuals in S, I, and R states, respectively, as:
\begin{align}
V_S(t) &= \max_{c_S} \Big\{  \int_{t}^{t+dt} u_S(c_S) dt + (1 - C(\cdot)\beta x_I dt)V_S(t+dt) + C(\cdot)\beta x_Idt V_I(t+dt) \Big\} \nonumber\\
V_I(t) &= \max_{c_I} \Big\{  \int_{t}^{t+dt} u_I(c_I) dt + (1-\nu dt)V_I(t+dt) + \nu dt V_R(t+dt) \Big\} \nonumber\\
V_R(t) &= \max_{c_R} \Big\{  \int_{t}^{t+dt} u_R(c_R) dt + V_R(t+dt) \Big\} \label{eq:3}
\end{align}
with terminal conditions, $V_S(T)=V_I(T)=V_R(T)=0$. 
\subsection{Mean Field Equilibrium solution}
The SIR dynamics and the Bellman equations are coupled by the contact strategies, $c_S, c_I, c_R$, and the population, $x_S, x_I, x_R$. The solution to this problem is the mean-field equilibrium (MFE), which is the fixed point $(c_S^{eq}, c_I^{eq}, c_R^{eq}, x_S, x_I, x_R)$ such that 1) the strategies $c_S^{eq}(t)$, $c_I^{eq}(t)$, $c_R^{eq}(t)$ are the optimal solutions in equations \ref{eq:3} given $x_S, x_I,$ and $x_R$ and 2) $x_S$, $x_I$, and $x_R$ are solutions to the system of ODEs in equation \ref{eq:1} given the optimal strategies.
\subsection{Socially Optimal solution}
Additionally, we can characterize the socially optimal solution. Similar to the second-best equilibrium, discussed by Lipsey and Lancaster\cite{lipsey1956general}\cite{ohdaira2009cooperation}, if the infected individuals do not pursue the most utility-maximizing decisions, there is another equilibrium, in which the population can attain higher utility on average. If a central planner chooses the contact strategies of the population to maximize the utility of the entire population, the resulting socially optimal solution solves the dynamic optimization problem,
\begin{equation}
c_S^{opt}, c_I^{opt}, c_R^{opt} = \arg\max \int_0^T x_S(t)u_S(c_S(t)) + x_I(t)u_I(c_I(t)) + x_R(t) u_R(c_R(t))dt.
\end{equation}
We can also define modified version of the socially optimal problem, in which there is cost to move away from the MFE solution. Given the MFE solution, $(c_S^{eq},c_I^{eq}, c_R^{eq})$, the modified problem is 
\begin{align}
c_S^{opt}, c_I^{opt}, c_R^{opt} = \arg\max \int_0^T & \Big[ x_S(t)u_S(c_S(t)) + x_I(t)u_I(c_I(t)) + x_R(t) u_R(c_R(t)) \nonumber \\
&- \frac{1}{2}k \sum_z x_z(c_z - c_z^{eq})^2\Big]dt.
\end{align}
\section{Results}
\subsection{Mean Field Equilibrium solution}
The MFE solution is characterized through Proposition \ref{V_R}, \ref{V_I}, \ref{V_S}. 
\begin{prop} \label{V_R}
The optimal strategy for an R individual is $c_R^{eq}=0.5b_R$ and the corresponding optimal value function is $V_R(t) = -u_R^{max} t + u_R^{max} T$. 
\end{prop}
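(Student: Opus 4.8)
The plan is to exploit the fact that R is an absorbing state, which collapses the dynamic optimization into a sequence of instantaneous, static maximizations. First I would look at the Bellman equation for $V_R$ from \eqref{eq:3},
\begin{equation}
V_R(t) = \max_{c_R} \Big\{ \int_t^{t+dt} u_R(c_R)\,dt + V_R(t+dt) \Big\},
\end{equation}
and observe that the continuation value $V_R(t+dt)$ carries no dependence on the present choice $c_R$: since there is no transition out of R, the contact decision made at time $t$ influences only the utility accrued on $[t,t+dt]$ and not the state, hence not the future value. Consequently the maximization over $c_R$ acts only on the first term, $\int_t^{t+dt} u_R(c_R)\,dt \approx u_R(c_R)\,dt$, so the optimal instantaneous choice reduces to $c_R^{eq}(t) = \argmax_{c_R} u_R(c_R)$, independent of $t$.

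Next I would invoke the structure of $u_R$ established in the utility subsection. Since $u_R(c_R) = (b_R c_R - c_R^2)^\gamma - a_R$ is concave on $[0,b_R]$ with a unique interior maximum at $c_R = b_R/2$, the pointwise maximizer is $c_R^{eq} = 0.5 b_R$ at every time, with maximal value $u_R^{max} := u_R(0.5 b_R) = (b_R^2/4)^\gamma - a_R$.

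Finally, substituting the optimal strategy back into the Bellman recursion gives $V_R(t) = u_R^{max}\,dt + V_R(t+dt)$; rearranging and passing to the limit $dt \to 0$ yields the ODE $\dot V_R(t) = -u_R^{max}$. Integrating this constant derivative backward from the terminal condition $V_R(T) = 0$ produces $V_R(t) = u_R^{max}(T - t) = -u_R^{max} t + u_R^{max} T$, as claimed.

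As for the main obstacle, there is in fact very little to overcome here: this is the simplest of the three propositions precisely because R is absorbing, so there is no coupling to the population dynamics $x_z$ and no transition term to carry through. The only point requiring genuine care is the justification that the continuation value is independent of the current control, which is what licenses collapsing the dynamic program to a pointwise maximization. In the harder cases ($V_I$ and $V_S$) this decoupling fails, since the contact rate feeds into the transition probabilities and couples the present choice to the future value; but here it is immediate, and the remaining steps are routine.
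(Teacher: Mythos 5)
Your proposal is correct and follows essentially the same route as the paper: expand the Bellman equation to first order in $dt$, note that the continuation value is independent of $c_R$ so the maximization reduces to the pointwise maximizer $c_R^{eq}=0.5b_R$ of the concave utility, and integrate $\dot V_R = -u_R^{max}$ backward from $V_R(T)=0$. The only cosmetic difference is that you argue the decoupling of the control from the continuation value explicitly before substituting, whereas the paper performs the Taylor expansion first and then maximizes; the substance is identical.
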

\begin{proof}
We substitute the Taylor expansion $V_R(t+dt) = V_R(t) + \Dot{V}_R(t)dt$, and take only the first order $dt$ terms.
\begin{align}
 V_R(t) &= \max_{c_R} \Big\{ \int_t^{t+dt} u_R(c_R)dt + V_R(t) + \Dot{V}_R(t) dt\Big\} \\
\Longrightarrow -\Dot{V}_R(t) &= \max_{c_R} \Big\{ u_R(c_R) \Big\}
\end{align}
$u_R$ attains its maximum, $u_R^{max}$, when $c_R^{eq} = 0.5b_R$, so we have
\begin{equation}
\Dot{V}_R(t) = -u_R^{max} \Longrightarrow V_R(t) = -u_R^{max}t + C
\end{equation}
Substituting the terminal condition, $V_R(T)=0$, we have $V_R(t) = -u_R^{max} t + u_R^{max} T$.
\end{proof}
In this model, the recovered individuals gain total immunity, so they always choose optimal contact rates.
\begin{prop} \label{V_I}
The optimal strategy for an individual in the infected state is $c_I^{eq}=0.5b_I$ and the corresponding optimal value function is $V_I(t) = V_R(T) - \frac{u_R^{max} - u_I^{max}}{\mu} (1-e^{\mu(t-T)})$
\end{prop}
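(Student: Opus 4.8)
The plan is to follow the same program as in Proposition \ref{V_R}: expand $V_I(t+dt)$ and $V_R(t+dt)$ to first order in $dt$, reduce the Bellman recursion to a differential equation, and then solve the resulting ODE against the terminal condition $V_I(T)=0$. Writing $\nu=\mu$ for the recovery rate, I would substitute $V_I(t+dt)=V_I(t)+\dot V_I(t)\,dt$ and $V_R(t+dt)=V_R(t)+\dot V_R(t)\,dt$ into the second line of \ref{eq:3}, multiply out the factor $(1-\mu\,dt)$, and discard terms of order $dt^2$. The key structural observation is that the only place $c_I$ appears is in the instantaneous utility $\int_t^{t+dt}u_I(c_I)\,dt = u_I(c_I)\,dt$: the recovery rate $\mu$ is constant and does not depend on the infected individual's own contact choice, so the transition terms are inert under the maximization. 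Hence the maximization decouples and is solved exactly as in Proposition \ref{V_R}, giving $c_I^{eq}=0.5\,b_I$ and optimal value $u_I^{max}$. This is the conceptual heart of the result: an already-infected individual sees no private penalty from extra contacts, which is precisely what will later separate the MFE strategy from the social optimum.

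After cancelling $V_I(t)$ from both sides and dividing by $dt$, the recursion collapses to the linear first-order ODE
\begin{equation}
-\dot V_I(t) = u_I^{max} - \mu V_I(t) + \mu V_R(t),
\end{equation}
equivalently $\dot V_I(t) = \mu V_I(t) - \mu V_R(t) - u_I^{max}$. I would then import the closed form $V_R(t)=u_R^{max}(T-t)$ from Proposition \ref{V_R} as a known inhomogeneous forcing term, so that the equation becomes a scalar linear ODE in $V_I$ alone.

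To solve it I would make the substitution $V_I(t)=V_R(t)+W(t)$, which clears the $(T-t)$ forcing: using $\dot V_R=-u_R^{max}$ one finds $\dot W = \mu W + (u_R^{max}-u_I^{max})$, a constant-coefficient equation with homogeneous solution $Ce^{\mu t}$ and particular solution $-(u_R^{max}-u_I^{max})/\mu$. Imposing $V_I(T)=0$, and hence $W(T)=0$ since $V_R(T)=0$, fixes the constant and yields
\begin{equation}
V_I(t) = V_R(t) - \frac{u_R^{max}-u_I^{max}}{\mu}\bigl(1-e^{\mu(t-T)}\bigr),
\end{equation}
as claimed; the integrating factor $e^{-\mu t}$ gives the same answer directly. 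I do not anticipate a genuine obstacle here, since the argument is essentially bookkeeping. The only points requiring care are (i) justifying that the $c_I$-maximization detaches from the transition terms, so that the instantaneous optimum $u_I^{max}$ can be pulled out, and (ii) correctly carrying the already-solved $V_R(t)$ through as the forcing term and matching the terminal condition.
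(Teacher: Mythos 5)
Your proposal is correct and follows essentially the same route as the paper's proof: first-order Taylor expansion of the Bellman recursion, the observation that the $c_I$-maximization decouples from the transition terms (yielding $c_I^{eq}=0.5b_I$ and $u_I^{max}$), and then solving the resulting linear ODE against $V_I(T)=0$; your substitution $V_I=V_R+W$ is just a repackaging of the paper's integrating-factor computation. Note that your derived answer $V_I(t)=V_R(t)-\frac{u_R^{max}-u_I^{max}}{\mu}\bigl(1-e^{\mu(t-T)}\bigr)$ matches the paper's proof, and the $V_R(T)$ appearing in the proposition statement is evidently a typo for $V_R(t)$.
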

\begin{proof}
Substituting the first order Taylor expansion and taking only the first order terms,
\begin{align}
& V_I(t) = \max_{c_I} \Big\{ \int_t^{t+dt} u_I(c_I)dt +(1-\mu dt)(V_I(t) + \Dot{V}_I(t)dt) + \mu dt (V_R(t) + \Dot{V}_R(t)dt) \Big\} \\
\Longrightarrow & V_I(t) = \max_{c_I} \Big\{ \int_t^{t+dt} u_I(c_I)dt + V_I(t) + \Dot{V}_I(t)dt -\mu V_I(t) dt + \mu V_R(t)dt \Big\} \\
\Longrightarrow & -\Dot{V}_I(t) = \max_{c_I} \Big\{ u_I(c_I) + \mu(V_R(t) - V_I(t))\Big\}
\end{align}
Because $u_I(c_I)$ is the only term dependent on $c_I$, the value function attains its maximum, $u_I^{max}$, when $c_I^{eq} = 0.5b_I$. 
\begin{equation}
-\Dot{V}_I(t) = u_I^{max} + \mu \Big( V_R(t) - V_I(t)\Big)
\end{equation}
$V_R(t)$ is known from Proposition \ref{V_R}, and this is a first-order linear ordinary differential equation that can be explicitly solved with integrating factor. Using the terminal condition, $V_I(T)=0$, we can find
\begin{equation}
V_I(t) = V_R(t) - \frac{u_R^{max} - u_I^{max}}{\mu} (1- e^{\mu (t-T)})
\end{equation}
\end{proof}
The I individuals, similar to the R individuals, do not change from the optimal contact rate. From the above formulation, we see that $V_I(t)$ is bounded by $V_R(t)$, and $V_R(t)-V_I(t)$ increases as $\mu$ decreases, since it implies that I individuals spend longer time as an infected before recovering.  
\begin{prop} \label{V_S}
The optimal strategy for a susceptible individual is $c_S^{eq}(t) < 0.5b_S$ during time $0\leq t < T$.
\end{prop}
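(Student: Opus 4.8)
The plan is to mirror the derivations of Propositions \ref{V_R} and \ref{V_I}: substitute the first-order Taylor expansion $V_S(t+dt) = V_S(t) + \dot{V}_S(t)\,dt$ (and likewise for $V_I$) into the Bellman equation for $V_S$, keep only first-order terms in $dt$, and use that in the density-dependent formulation a susceptible choosing $c_S$ becomes infected at rate $\beta c_S c_I x_I$, so the coefficient multiplying the individual's own choice is linear in $c_S$. This yields the stationary Hamilton--Jacobi--Bellman equation
\begin{equation}
-\dot{V}_S(t) = \max_{c_S}\Big\{ u_S(c_S) + \beta c_S c_I x_I(t)\big(V_I(t) - V_S(t)\big)\Big\},
\end{equation}
with first-order condition
\begin{equation}
u_S'(c_S^{eq}) = \beta c_I x_I(t)\big(V_S(t) - V_I(t)\big).
\end{equation}
Since $u_S$ is strictly concave on $[0,b_S]$ with interior maximizer $c_S = 0.5 b_S$ (where $u_S' = 0$) and $u_S'$ strictly decreasing, the proposition reduces to showing the right-hand side is strictly positive for $t < T$; equivalently, that $V_S(t) > V_I(t)$ on $[0,T)$, while $\beta$, $c_I = 0.5 b_I$, and $x_I(t)$ are all positive there (the last following from $x_I(0) > 0$ and the ODEs \ref{eq:1}).

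The core of the argument is therefore the comparison $V_S(t) > V_I(t)$ on $[0,T)$, which I would establish by a backward-in-time maximum-principle argument on $W(t) := V_S(t) - V_I(t)$. At the terminal time all value functions vanish, and evaluating the HJB equations there gives $\dot{V}_S(T) = -u_R^{max}$ (the infection term drops out because $V_I(T)-V_S(T)=0$, so the maximizer is $0.5 b_S$ and $u_S^{max}=u_R^{max}$) and $\dot{V}_I(T) = -u_I^{max}$ from Proposition \ref{V_I}. Because $u_R^{max} > u_I^{max}$ (as $b_R > b_I$ and $a_R < a_I$), we get $\dot{W}(T) = -(u_R^{max}-u_I^{max}) < 0$, so $W(t) > 0$ for $t$ just below $T$. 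Suppose $W$ vanished again, and let $t^*$ be the largest such time in $[0,T)$, so $W(t^*)=0$, $W>0$ on $(t^*,T)$, and hence $\dot{W}(t^*) \ge 0$. At $t^*$ the infection term again drops out, giving $\dot{V}_S(t^*) = -u_R^{max}$, while $-\dot{V}_I(t^*) = u_I^{max} + \mu\big(V_R(t^*)-V_I(t^*)\big)$ with $V_R - V_I = \tfrac{u_R^{max}-u_I^{max}}{\mu}\big(1-e^{\mu(t^*-T)}\big)$ from Proposition \ref{V_I}. Substituting yields $\dot{W}(t^*) = -(u_R^{max}-u_I^{max})\,e^{\mu(t^*-T)} < 0$, contradicting $\dot{W}(t^*)\ge 0$. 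Hence $W(t) > 0$ on all of $[0,T)$.

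Combining, $u_S'(c_S^{eq}) > 0 = u_S'(0.5 b_S)$ on $[0,T)$, and strict concavity forces $c_S^{eq}(t) < 0.5 b_S$, as claimed. I expect the comparison $V_S > V_I$ to be the main obstacle: unlike $V_R$ and $V_I$, the value $V_S$ admits no closed form (the optimizer $c_S^{eq}$ enters implicitly), so direct integration is unavailable and one must argue purely from the sign structure at $T$ and at a putative crossing. Two technical points to nail down are the interiority of the maximizer --- $u_S'(0^+)=+\infty$ for $\gamma\in(0,1)$ guarantees the optimum lies strictly inside $(0,0.5b_S)$ rather than at the boundary $c_S=0$ --- and the standing assumption $x_I(t)>0$ on $[0,T)$, without which the inequality would degrade to a weak one.
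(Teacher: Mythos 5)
Your proposal is correct and follows essentially the same route as the paper: derive the HJB equation $-\dot{V}_S = \max_{c_S}\{u_S(c_S) - c_S c_I \beta x_I (V_S - V_I)\}$, take the first-order condition, and reduce the claim to $V_S(t) > V_I(t)$ on $[0,T)$. Your treatment of that comparison is in fact more careful than the paper's: the paper argues that near a touching point $\dot{V}_S = -u_S^{max} < -u_I^{max} = \dot{V}_I$, silently dropping the $\mu(V_R - V_I)$ term from $\dot{V}_I$, which does not vanish at interior times; your crossing-point computation retains it and shows $\dot{W}(t^*) = -(u_R^{max}-u_I^{max})e^{\mu(t^*-T)} < 0$, so the sign still comes out right. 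Your side remarks on interiority of the maximizer for $\gamma\in(0,1)$ and on the positivity of $x_I$ are also points the paper leaves implicit.
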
  
\begin{proof}
Substituting the Taylor expansion and taking only the first order terms, 
\begin{align}
& V_S(t)= \max_{c_S} \int_{t}^{t+dt}u_S(c_S)dt + \Big(1-c_Sc_I\beta x_I dt\Big)V_S(t+dt) + c_Sc_I \beta x_I dt V_I(t+dt) \\
\Longrightarrow & V_S(t) = \max_{c_S} \int_{t}^{t+dt} u_S(c_S)dt + V_S(t) + \Dot{V}_S(t) dt - c_Sc_I\beta x_Idt (V_S(t)-V_I(t)) \\
\Longrightarrow & -\Dot{V}_S(t) = \max_{c_S} u_S(c_S) - c_Sc_I\beta x_I (V_S(t) - V_I(t))
\end{align}
The objective function is concave, so we set the derivative equal to 0 to find $c_S^*$.
\begin{align}
\frac{du_S}{dc_S}\Big\rvert_{c_S=c_S^{eq}} = 0.5b_I \beta x_I (V_S(t) - V_I(t))
\end{align}
$u_S$ is concave, so $c_S^{eq}$ can be uniquely found. Also, we have $V_S(t) > V_I(t)$ for all $t < T$. This is because starting from the terminal condition $V_S(T) = V_I(T)=0$, if at any time $t$ we are sufficiently close to $V_S=V_I$, $\Dot{V}_S(t) = -u_S^{max} < -u_I^{max} = \Dot{V}_I(t)$. Therefore, $0.5b_I \beta x_I (V_S - V_I)>0$ for $t<T$, and $c_S^{eq}<0.5b_S$.
\end{proof}
Note that $c_S^{eq}$ is smaller for bigger values of $b_I \beta x_I (V_S(t)-V_I(t))$, which means that the susceptible individuals should decrease contact if 1) infected population gets large, 2) the disease spreads easily, 3) cost of being infected is large, or 4) if the disease minimally affects the ability of the infected.
\subsection*{Numerical solution}
We use discrete time steps $\Delta t$ to find the numerical solutions to the equations. With some initial $c^0= (c_S^0, c_I^0, c_R^0)$, we compute $x^0=(x_S^0, x_I^0, x_R^0)$ using the ODE forward equations. Then, $c^1$ can be computed via backward induction with the given $x^0$. We continue this until we find $c^k$ and $x^k$ such that $c^k=c^{k-1}$ and $x^k=x^{k-1}$, which is the MFE solution. \\
\begin{figure}[h!]
\centering
\includegraphics[scale=1]{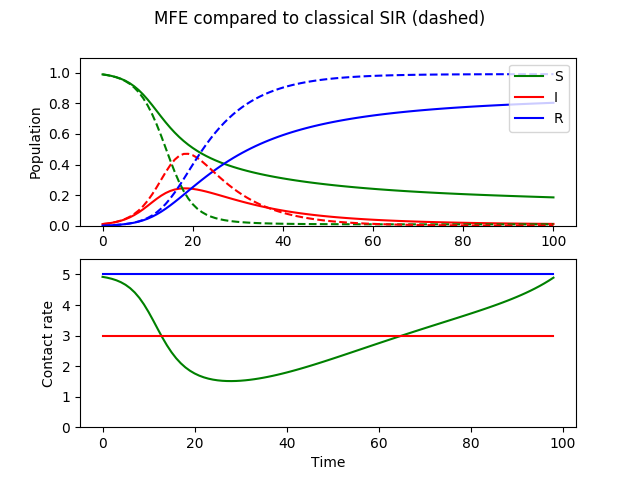} 
\caption{The top figure shows $x_z$ for $z \in \{S,I,R\}$, the SIR dynamics for MFE solution (solid lines) and the classical case without adaptive behaviors (dashed lines). Bottom figure shows $c_z^{eq}(t)$ for $z\in\{S,I,R\}$.}
\label{fig:mfe}
\end{figure}
\\
Figure \ref{fig:mfe} shows the MFE numerical solution with disease parameters $\beta=0.03$ and $\mu=0.1$, and  utility parameters $b_S=b_R=10$, $b_I=6$, $a_S=a_R=0$, $a_I=4$, and $\gamma=0.25$. As we showed in Proposition \ref{V_R} and \ref{V_I}, the recovered and infected individuals have no incentive to lower their contact rates, and so they continue with optimal level of contact rates. The susceptible individuals lower their contact rate to balance their immediate utilities and their expected cost of possibly getting infected. In result, the spread of infection is mitigated by the behavioral changes by the susceptible population, compared to the classical case in which no adaptive behavior is considered. The infection curve has lower peak due to the behavioral changes, but also a longer tail, because there are more remaining susceptible individuals to get the disease after the peak.\\ 
\begin{figure}[h!]
\centering
\begin{tabular}{cc}
\includegraphics[scale=0.5]{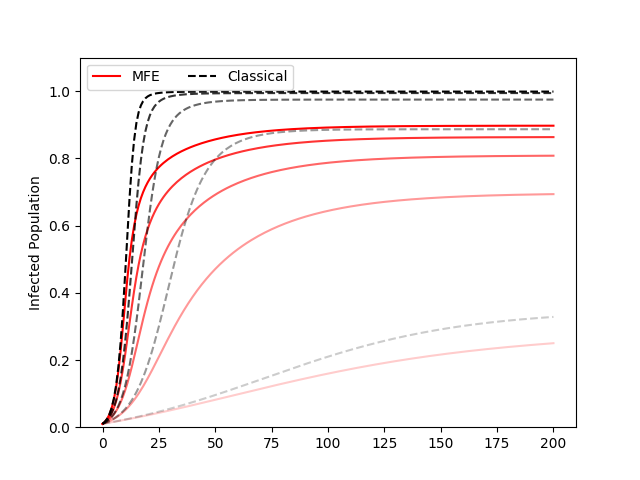} & \includegraphics[scale=0.5]{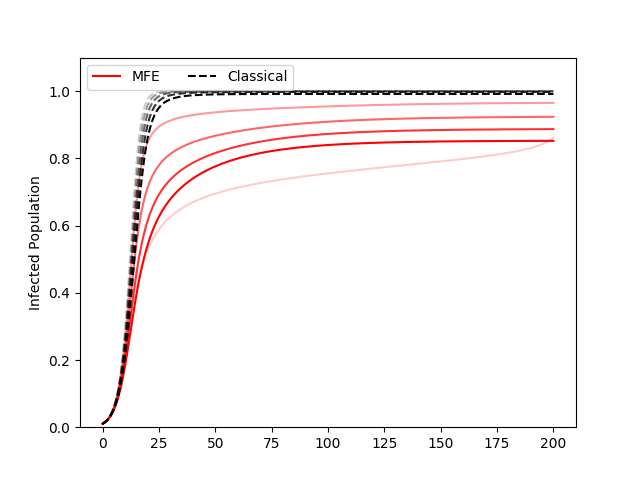} \\
(a) Cumulative curves $(0.01\leq \beta \leq 0.05)$ & (b) Cumulative curves $(0.02 \leq \mu \leq 0.1)$ \\
\includegraphics[scale=0.5]{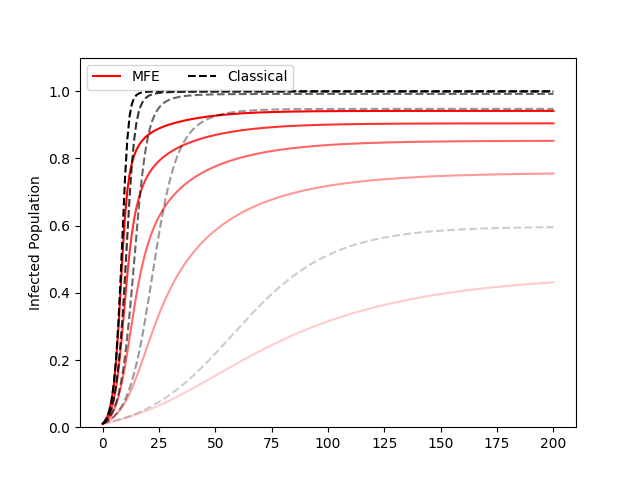} & \includegraphics[scale=0.5]{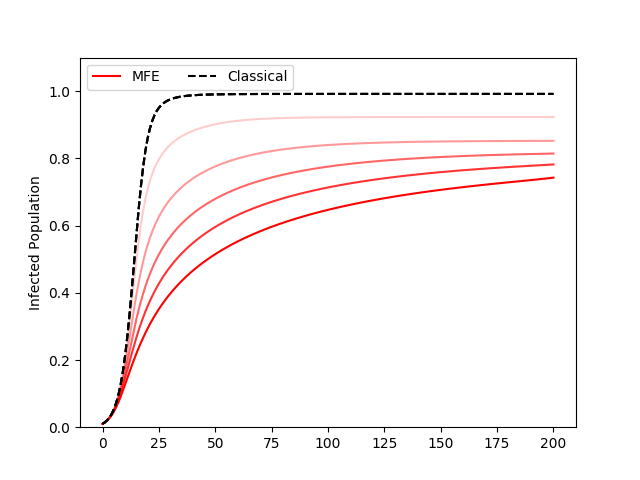} \\
(c) Cumulative curves $(2 \leq b_I \leq 10)$ & (d) Cumulative curves $(2 \leq a_I \leq 10)$
\end{tabular}
\caption{The cumulative epidemic size is shown for the MFE solution (red) and the classical SIR (black) for the range of given parameter. For each (a)-(d), five cumulative curves are shown corresponding to five values of the given parameter in the range, where the more transparent lines are smaller parameter values.}
\label{fig:fs}
\end{figure}
\\
Figure \ref{fig:fs} shows the cumulative epidemic curve of the MFE solution and the classical SIR for different parameter values of $\beta, \mu, b_I,$ and $a_I$. For any set of parameter values, the MFE solution, because of the adaptive behaviors, results in smaller final size as well as more gradual spread of the epidemic. \\
\\
If the disease is more infectious (large $\beta$), the MFE solution shows faster spread as well as larger final size, although mitigated compared to what the classical SIR model would predict (Figure \ref{fig:fs}a). If the disease has longer recovery time (small $\mu$), the classical SIR model predicts faster spread. However, the MFE solution predicts a different result. For example, when $\mu=0.02$, the classical case predicts that because the average infectious period is long, the number of infected population rises faster. However, the behavioral decision of individuals are considered in the MFE solution, and so a more gradual epidemic is predicted, since it is optimal to suppress contacts to avoid getting infected in the first place (Fig. \ref{fig:fs}b).\\
\\
$b_I$ and $a_I$ are new parameters introduced to consider the effect of behavioral change. $b_I$ denotes the economic utility, depending on social activity, and $a_I$ denotes the health utility of the individual. If the disease does not affect the productivity of the infected individual (high $b_I$), the disease will spread as if its transmission rate $\beta$ is higher (Fig. \ref{fig:fs}c). In the case of COVID-19, the symptoms of many have been mild to moderate, which might have contributed to the perceived higher transmission rate. \\
\\
The epidemic spread as predicted by the classical SIR model would not change with changes in $a_I$, but the MFE solution shows smaller epidemic and flatter growth rate for high $a_I$, as individuals choose to make less contacts to avoid the high cost of becoming infected (Fig. \ref{fig:fs}d). 
\subsection{Socially optimal solution} \label{ssec:so}
The MFE solution, discussed above, considers the case in which individuals maximize their own utility. For the socially optimal solution, we pose a centralized control problem, in which we find $(c_S^{opt}, c_I^{opt}, c_R^{opt})$ which maximizes the average utility of the entire population. Therefore we solve
\begin{align} \label{eq:19}
c_S^{opt}(t), c_I^{opt}(t), c_R^{opt}(t) &= \arg\max \int_0^T x_S(t)u_S(c_S(t)) + x_I(t)u_I(c_I(t)) + x_R(t)u_R(c_R(t)) dt \\
\text{subject to: }& \begin{cases}
\Dot{x}_S = -c_Sc_I\beta x_S x_I \\
\Dot{x}_I = c_Sc_I\beta x_S x_I - \nu x_I \\
\Dot{x}_R = \nu x_I \\
\end{cases}
\end{align}
In order to solve the optimal control problem, we use Pontryagin's maximum principle, which gives the necessary conditions for the optimal controls, given the evolving dynamics of the system.
\begin{theorem} [Pontryagin's Maximum Principle]
Let $\bm{x}=[x_S,x_I,x_R]^T$ and $\bm{c}=[c_S,c_I,c_R]^T$. For the given deterministic dynamics, $\Dot{\bm{x}}=f(\bm{x},\bm{c})$, the Hamiltonian is defined as 
\begin{equation*}
H(\bm{x}, \bm{c}, \bm{\lambda}, t) := L(\bm{x}, \bm{c}) + \bm{\lambda} ^T f(\bm{x},\bm{c})
\end{equation*} 
where $\bm{\lambda}(t)$ is the costate trajectory. If $\bm{x}(t)$, $\bm{c}^{opt}(t)$ is the optimal trajectory in $0\leq t \leq T$ from $\bm{x}(0)$, then $\bm{\lambda}(t)$ satisfies
\begin{equation*}
-\Dot{\bm{\lambda}} = H_{\bm{x}}(\bm{x}^{opt}, \bm{c}^{opt}, \bm{\lambda}, t) = L_{\bm{x}}(\bm{x}^{opt},\bm{c}^{opt}) + \bm{\lambda}^Tf_{\bm{x}}(\bm{x}^{opt},\bm{c}^{opt})
\end{equation*}
and $\bm{c}^{opt}$ is the solution to the optimization problem,
\begin{equation*}
\bm{c}^{opt} = \argmax_{\bm{c}} H(\bm{x}^{opt}, \bm{c}, \bm{\lambda})
\end{equation*}
\end{theorem}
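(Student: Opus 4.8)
The statement is the classical Pontryagin Maximum Principle, so the plan is to derive it as a first-order necessary condition for the constrained optimization in \eqref{eq:19} by the multiplier (costate) method, and then to indicate the needle-variation argument that upgrades it to the full maximization form. Throughout I would fix the initial state $\bm{x}(0)$ and leave the terminal state $\bm{x}(T)$ free, matching the setup of the socially optimal problem. First I would adjoin the dynamics $\dot{\bm{x}}=f(\bm{x},\bm{c})$ to the running payoff $J[\bm{c}]=\int_0^T L(\bm{x},\bm{c})\,dt$ using the costate trajectory $\bm{\lambda}(t)$, forming
\begin{equation*}
\bar{J} = \int_0^T \Big[ L(\bm{x},\bm{c}) + \bm{\lambda}^T\big(f(\bm{x},\bm{c}) - \dot{\bm{x}}\big)\Big]\,dt = \int_0^T \Big[ H(\bm{x},\bm{c},\bm{\lambda}) - \bm{\lambda}^T\dot{\bm{x}}\Big]\,dt .
\end{equation*}
Because the adjoined constraint vanishes along any admissible trajectory, $\bar{J}=J$ for every admissible pair $(\bm{x},\bm{c})$, while $\bm{\lambda}$ remains free to be pinned down later. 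Integrating the $\bm{\lambda}^T\dot{\bm{x}}$ term by parts rewrites $\bar J$ as $\int_0^T(H + \dot{\bm{\lambda}}^T\bm{x})\,dt - \bm{\lambda}(T)^T\bm{x}(T) + \bm{\lambda}(0)^T\bm{x}(0)$, which cleanly isolates the dependence on the state path.

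Next I would take the first variation about the optimal pair. Perturbing $\bm{c}^{opt}\mapsto \bm{c}^{opt}+\epsilon\bm{\eta}$ induces a state perturbation $\bm{\xi}(t)$ satisfying $\bm{\xi}(0)=\bm{0}$ (the initial data is fixed), and to first order
\begin{equation*}
\delta\bar{J} = \int_0^T \Big[ \big(H_{\bm{x}} + \dot{\bm{\lambda}}^T\big)\bm{\xi} + H_{\bm{c}}\,\bm{\eta}\Big]\,dt - \bm{\lambda}(T)^T\bm{\xi}(T) .
\end{equation*}
Here I would exploit the still-free multiplier: choosing $\bm{\lambda}$ to solve $-\dot{\bm{\lambda}}=H_{\bm{x}}$ annihilates the entire $\bm{\xi}$ integrand, and the transversality choice $\bm{\lambda}(T)=\bm{0}$ (appropriate to a free terminal state) kills the boundary term. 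This choice is exactly the costate equation asserted in the theorem, and it reduces the variation to $\delta\bar{J}=\int_0^T H_{\bm{c}}\,\bm{\eta}\,dt$. Since optimality forces $\delta\bar{J}=0$ for all admissible $\bm{\eta}$, the fundamental lemma of the calculus of variations yields $H_{\bm{c}}=\bm{0}$ along the optimal path.

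The hard part will be upgrading this interior stationarity condition $H_{\bm{c}}=\bm{0}$ to the genuine maximization statement $\bm{c}^{opt}=\argmax_{\bm{c}}H$. The smooth variation above only probes directions lying inside the control set and therefore delivers a mere first-order condition; for a constrained control set such as $c_z\in[0,b_z]$ it cannot by itself certify a boundary optimum. The rigorous route is a needle (spike) variation: replace $\bm{c}^{opt}$ by an arbitrary admissible value $\bm{v}$ on a short interval $[\tau,\tau+\epsilon]$, propagate the resulting $O(\epsilon)$ state jump to the endpoint through the linearized dynamics, and evaluate the induced change in $J$ using the same costate $\bm{\lambda}$. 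Optimality then forces $H(\bm{x}^{opt},\bm{v},\bm{\lambda})\le H(\bm{x}^{opt},\bm{c}^{opt},\bm{\lambda})$ pointwise in $t$, which is precisely the maximization condition. In the present application this subtlety is largely benign: since each $u_z$ is concave and the controls enter $H$ through a concave objective, $H$ is concave in $\bm{c}$, so the interior stationarity condition and the maximization condition coincide wherever the optimum is interior — which is the form I would actually invoke when solving \eqref{eq:19}.
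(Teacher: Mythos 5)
The paper does not prove this theorem at all: it states Pontryagin's Maximum Principle as a classical, citable result and immediately applies it to the social-planner problem, so there is no in-paper argument to compare against. Your sketch is the standard derivation --- adjoining the dynamics with a costate, integrating by parts, choosing $\bm{\lambda}$ to satisfy $-\dot{\bm{\lambda}}=H_{\bm{x}}$ with $\bm{\lambda}(T)=\bm{0}$ so that only the $H_{\bm{c}}\bm{\eta}$ term survives, and then invoking needle variations to upgrade stationarity to pointwise maximization --- and it is correct in outline; you are also right, and commendably explicit, that the smooth variation alone only yields $H_{\bm{c}}=\bm{0}$ and cannot certify boundary optima for the constrained control set $c_z\in[0,b_z]$.

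One substantive inaccuracy in your closing remark: in this application $H$ is \emph{not} concave in $\bm{c}$ jointly. The transmission term contributes a bilinear coupling $-c_Sc_I\beta x_Sx_I(\lambda_S-\lambda_I)$ whose Hessian in $(c_S,c_I)$ is indefinite, so $H$ is concave only in each control separately (for $\gamma\in(0,1]$), not in the pair. This is precisely why the paper's subsequent proposition on $c_I^{opt}<c_I^{eq}$ cannot simply read off an interior stationary point but instead argues by cases on $\gamma$ and by contradiction at the corner $(c_S,c_I)=(0,0.5b_I)$. Your derivation of the principle itself is unaffected, but the final sentence claiming that stationarity and maximization coincide here by joint concavity should be weakened to separate concavity in each coordinate.
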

\begin{prop}
The socially optimal contact rate of the infected, $c_I^{opt}$ must always be less than the MFE contact rate, $c_I^{eq}$ during time $0 \leq t < T$.
\end{prop}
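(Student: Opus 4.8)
The plan is to apply Pontryagin's maximum principle to the control problem \eqref{eq:19} and extract a first-order condition for $c_I^{opt}$ that can be compared against the MFE value $c_I^{eq}=0.5b_I$ of Proposition \ref{V_I}. With running utility $L = x_S u_S(c_S) + x_I u_I(c_I) + x_R u_R(c_R)$ and costate vector $\bm\lambda = [\lambda_S,\lambda_I,\lambda_R]^T$, the Hamiltonian is
\[
H = x_S u_S(c_S) + x_I u_I(c_I) + x_R u_R(c_R) - \lambda_S c_S c_I \beta x_S x_I + \lambda_I\big(c_S c_I \beta x_S x_I - \nu x_I\big) + \lambda_R \nu x_I .
\]
Maximizing $H$ over $c_I$ and using concavity of $u_I$, the interior optimizer satisfies
\[
\frac{du_I}{dc_I}\Big\rvert_{c_I=c_I^{opt}} = c_S \beta x_S\,(\lambda_S - \lambda_I).
\]
Since $u_I$ peaks at $c_I=0.5b_I=c_I^{eq}$, its derivative is strictly positive exactly when $c_I<0.5b_I$. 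Hence, given $c_S,\beta,x_S>0$, the claim $c_I^{opt}(t)<c_I^{eq}$ is \emph{equivalent} to the shadow-price inequality $\psi(t):=\lambda_S(t)-\lambda_I(t)>0$ on $0\le t<T$, and the proposition reduces entirely to establishing the sign of $\psi$.

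Next I would write the adjoint equations $-\dot{\bm\lambda}=H_{\bm x}$ with transversality conditions $\lambda_S(T)=\lambda_I(T)=\lambda_R(T)=0$. A direct computation gives
\[
-\dot\lambda_S = u_S(c_S^{opt}) - c_S c_I \beta x_I\,\psi, \qquad -\dot\lambda_I = u_I(c_I^{opt}) - c_S c_I \beta x_S\,\psi - \nu(\lambda_I-\lambda_R),
\]
and $-\dot\lambda_R = u_R^{max}$. Subtracting produces a scalar ODE for the gap,
\[
-\dot\psi = \big(u_S(c_S^{opt}) - u_I(c_I^{opt})\big) + c_S c_I \beta\,\psi\,(x_S-x_I) + \nu(\lambda_I-\lambda_R),
\]
which I would analyze by a backward no-crossing argument from $\psi(T)=0$, mirroring the proof of Proposition \ref{V_S}. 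Two structural facts drive it: $b_S=b_R$ and $a_S=a_R$ force $u_S^{max}=u_R^{max}>u_I^{max}$, and whenever $\psi=0$ the first-order conditions return $c_S^{opt}=0.5b_S$ and $c_I^{opt}=0.5b_I$, collapsing the running-utility difference to $u_S^{max}-u_I^{max}>0$. At $t=T$ all costates vanish, so $-\dot\psi(T)=u_S^{max}-u_I^{max}>0$ and $\psi$ becomes strictly positive immediately below $T$.

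To extend this to all of $[0,T)$ I would argue by contradiction: if $\psi$ returned to zero at a latest time $t^\ast<T$ with $\psi>0$ on $(t^\ast,T)$, then $\dot\psi(t^\ast)\ge 0$, i.e. $-\dot\psi(t^\ast)\le 0$. Evaluating the gap ODE at $t^\ast$ (where $\psi=0$) leaves $-\dot\psi(t^\ast)=(u_S^{max}-u_I^{max})+\nu\big(\lambda_I(t^\ast)-\lambda_R(t^\ast)\big)$, so the contradiction follows once one controls the coupling term, namely shows $\nu\big(\lambda_R(t^\ast)-\lambda_I(t^\ast)\big)<u_S^{max}-u_I^{max}$. \textbf{This coupling term is exactly where I expect the main difficulty.} In the MFE analysis the analogous quantity $V_R-V_I$ was bounded above by $(u_R^{max}-u_I^{max})/\mu$ via the explicit solution in Proposition \ref{V_I}, and that bound is precisely what closes the comparison. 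Here $\lambda_I$ obeys an equation carrying the extra infection-externality term $-c_S c_I \beta x_S\,\psi$ (absent from the MFE $V_I$ equation, since an individual does not internalize the cost of infecting others), and this feeds back into $\lambda_R-\lambda_I$ in a state-dependent way; the auxiliary sign of $x_S-x_I$ in the gap ODE is likewise not fixed over an epidemic. My plan to close the argument is to derive a companion differential inequality for $\phi:=\lambda_R-\lambda_I$ on $(t^\ast,T)$, where $\psi>0$, and show that $\nu\phi$ cannot reach $u_R^{max}-u_I^{max}$ before $\psi$ returns to zero, so that the externality term only reinforces $\lambda_S>\lambda_I$; making this comparison uniform over the coupled $(\psi,\phi)$ system, rather than pointwise at $t^\ast$, is the crux.
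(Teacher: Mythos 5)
Your setup — Pontryagin's principle, the costate system, the first-order condition $u_I'(c_I^{opt}) = c_S^{opt}\beta x_S(\lambda_S-\lambda_I)$, and a backward no-crossing argument for $\psi=\lambda_S-\lambda_I$ launched from $\psi(T)=0$ — is the same route the paper takes, and your insistence on controlling the coupling term $\nu(\lambda_R-\lambda_I)$ is to your credit: the paper simply asserts $\dot\lambda_S<\dot\lambda_I$ near a crossing without bounding that term, whereas the sign of $-\dot\psi$ at $\psi=0$ really does hinge on $\nu(\lambda_R-\lambda_I)<u_S^{max}-u_I^{max}$ (the analogue of the MFE bound $\mu\bigl(V_R(t)-V_I(t)\bigr)=(u_R^{max}-u_I^{max})(1-e^{\mu(t-T)})<u_R^{max}-u_I^{max}$ from Proposition \ref{V_I}). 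Still, as you say yourself, that bound is announced as a plan rather than proved, so the positivity of $\psi$ is not actually established in your write-up.

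The more serious gap is a missing idea rather than an unexecuted step: your claim that the proposition ``reduces entirely to establishing the sign of $\psi$'' silently assumes $c_S^{opt}>0$. The first-order condition is $u_I'(c_I^{opt}) = c_S^{opt}\beta x_S\psi$, so if $c_S^{opt}=0$ then $c_I^{opt}=0.5b_I=c_I^{eq}$ exactly, no matter how positive $\psi$ is — and total isolation of the susceptibles is not a pathological corner in this model; it is essentially the regime the paper's own numerics exhibit when a large fraction of the population is initially infected. The paper devotes the entire second half of its proof to this point, splitting on the utility exponent: for $\gamma<1$, $u_S'(c_S)\to\infty$ as $c_S\to 0^{+}$ while the risk term $c_I^{opt}\beta x_I(\lambda_S-\lambda_I)$ stays bounded, so the susceptible first-order condition forces $c_S^{opt}>0$; for $\gamma=1$ the marginal utility at zero is finite, $c_S^{opt}=0$ can genuinely occur, and the paper rules out the configuration $(c_S^{opt},c_I^{opt})=(0,0.5b_I)$ by showing it would persist for all later times, making $\lambda_S-\lambda_I$ increase and contradicting the transversality condition $\lambda_S(T)=\lambda_I(T)=0$. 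Without some version of this step, your argument — even with $\psi>0$ fully proved — yields only $c_I^{opt}\le c_I^{eq}$, not the strict inequality claimed.
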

\begin{proof}
We can apply the Pontryagin's maximum principle, which gives the necessary condition for optimality. If $c_S^{opt}, c_I^{opt}, c_R^{opt}$ are optimal solutions, then there exist Lagrangian multipliers, $\lambda_S(t), \lambda_I(t), \lambda_R(t)$, such that $\lambda_S(T)=0,\lambda_I(T)=0,\lambda_R(T)=0$, and for $t<T$, they satisfy:
\begin{align}
-\Dot{\lambda}_S &=c_S^{opt}c_I^{opt}\beta x_I (\lambda_I - \lambda_S) + u_S(c_S^{opt}) \\
-\Dot{\lambda}_I &=c_S^{opt}c_I^{opt}\beta x_S (\lambda_I - \lambda_S) + \mu(\lambda_R - \lambda_I) + u_I(c_I^{opt}) \label{eq:23}\\
-\Dot{\lambda}_R &= u_R(c_R^{opt}) \\
c_S^{opt},c_I^{opt},c_R^{opt}&=\arg\max c_Sc_I\beta x_Sx_I(\lambda_I-\lambda_S) + \mu x_I(\lambda_R-\lambda_I) + \sum_{z}x_zu_z(c_z)
\end{align}
Note the similarities between our expressions for $\lambda_S,\lambda_I,\lambda_R$ and $V_S,V_I,V_R$ from the MFE, with the additional term in (\ref{eq:23}) as the only difference. Intuitively, this additional term represents the infected individual caring about the consequences of its contact strategy on the population, which is not considered by the selfish I individual in its MFE objective function. It can be interpreted as the I individual's internalized negative externalities (i.e. thinking about our actions and their impact on others). First, we see that the $c_R$ term in the objective function can be separated, and so we can find the maximizer, $c_R^{opt} = 0.5b_R = c_R^{eq}$, which gives $\lambda_R(t)=V_R(t)$ for all $t$. Therefore, the optimization problem for $c_S^*$ and $c_I^*$ is 
\begin{equation}
c_S^{opt},c_I^{opt} = \arg \max  x_Su_S(c_S) + x_Iu_I(c_I) -  c_Sc_I\beta x_Sx_I(\lambda_S-\lambda_I) \label{eq:26}
\end{equation}
Clearly, $\beta x_S x_I > 0$ for all $t$, assuming nontrivial initial conditions, $x_S(0)>0$ and $x_I(0)>0$. Also, $\lambda_S - \lambda_I >0$ for $t<T$. Starting from the given terminal conditions $\lambda_S(T) = \lambda_I(T)=0$, if at any point we get sufficiently close to $\lambda_S =\lambda_I$, we see that $\Dot{\lambda}_S < \Dot{\lambda_I}$, so $\lambda_S$ will be strictly larger than  $\lambda_I$ in $0\leq t < T$. Therefore, $\beta x_Sx_I(\lambda_S - \lambda_I) > 0$ for all $0\leq t < T$. \\
\\
With the utility function $u_z(c_z)=(b_zc_z-c_z^2)^\gamma - a_z$, we will prove that $c_I^{opt} < 0.5b_I$ for all $t$ by dividing the problem into two cases: i) $\gamma < 1$ and ii) $\gamma=1$. \\
\\
i) $\gamma < 1$ \\
From (\ref{eq:26}), $c_I^{opt}=0.5b_I$ only if $c_S^{opt}\beta x_S x_I (\lambda_S - \lambda_I)=0$, which is only true if $c_S^{opt}=0$. However, we see that for some given $c_I^{opt}$, $c_S^{opt}$ can be computed to be
\begin{align}
& \frac{du_S}{dc_S}\Big\rvert_{c_S=c_S^{opt}} - c_I^{opt}\beta x_I(\lambda_S - \lambda_I) = 0 \\
\Longrightarrow & \frac{\gamma(b_S-2c_S^{opt})}{(b_Sc_S^{opt}-c_S^{opt2})^{1-\gamma}} = c_I^{opt} \beta x_I (\lambda_S - \lambda_I)
\end{align}
Since the left side monotonically decreases from $\infty$ to 0 in the domain $[0,0.5b_I]$, $c_S^{opt}$ can be uniquely found and since the right side is bounded, $c_S^{opt} > 0$. Therefore, $c_I^{opt} < 0.5b_I$ must be true. \\
\\
ii) $\gamma = 1$ \\
Plugging in $\gamma=1$ gives the optimization problem as
\begin{equation}
c_S^{opt},c_I^{opt} = \arg\max x_S (b_Sc_S - c_S^2) + x_I (b_Ic_I - c_I^2) - c_Sc_I \beta x_S x_I (\lambda_S - \lambda_I)
\end{equation}
Assume for contradiction that $c_I^{opt}=0.5b_I$ at some $t < T$. This is only possible if $c_S^{opt} = 0$, which implies that $b_S \leq c_I^{opt}\beta x_I (\lambda_S - \lambda_I)$ holds true. Since the optimal contact rates at time $t$ is given as $c_S^{opt}(t)=0$ and $c_I^{opt}(t)=0.5b_I$, we can plug these in to get
\begin{align}
\Dot{\lambda}_S(t) &= 0 \\
\Dot{\lambda}_I(t) &= -\mu (\lambda_R(t) - \lambda_I(t)) - u_I^{max}
\end{align}
With the known dynamics of $\lambda_S,\lambda_I,$ and $x_I$, we can use the first order Taylor expansion to find
\begin{align}
x_I(t+dt) &\Big(\lambda_S(t+dt) - \lambda_I(t+dt)\Big) = (x_I -\mu x_Idt) \Big( \lambda_S - \lambda_I + \mu(\lambda_R - \lambda_I)dt + u_I^{max}dt\Big) \\
&=x_I \Big( \lambda_S - \lambda_I + \mu(\lambda_R - \lambda_I)dt + u_I^{max}dt - \mu(\lambda_S - \lambda_I)dt \Big) > x_I(\lambda_S - \lambda_I)
\end{align}
where the last inequality is due to $\lambda_R$ being the upper bound of $\lambda_S$. We see that $c_I^{opt} \beta x_I (\lambda_S - \lambda_I)$ is monotonically increasing in time, and therefore, if $(c_S^{opt},c_I^{opt}) = (0,0.5b_I)$ is the optimal solution at time $t$, it is also the solution at time $t+dt$. This means that $\lambda_S - \lambda_I$ must be increasing in time, but then it cannot satisfy the transversality condition, $\lambda_S(T) = \lambda_I(T) = 0$. Therefore, by contradiction, $c_I^* < 0.5b_I$ for all $0\leq t < T$.
\end{proof}
This proposition implies that the selfish behavior of the infected is never optimal for the population, and so the appropriate policy in all cases is to decrease the contacts of the infected below the level that they want to selfishly make.
\subsection*{Numerical Solution}
For the same set of parameters as in Figure \ref{fig:mfe}, we can compute the socially optimal solution.
\begin{figure}[h!]
\centering
\begin{tabular}{cc}
\includegraphics[scale=0.5]{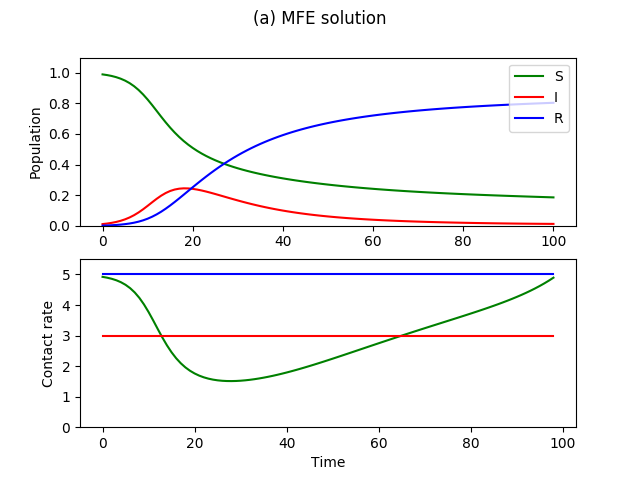} & \includegraphics[scale=0.5]{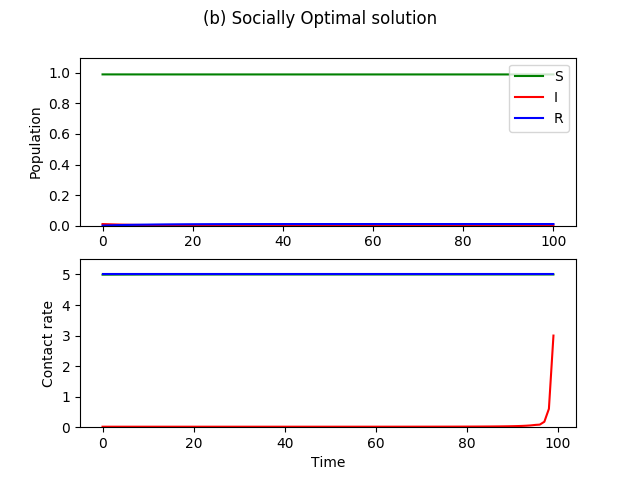} \\
\includegraphics[scale=0.5]{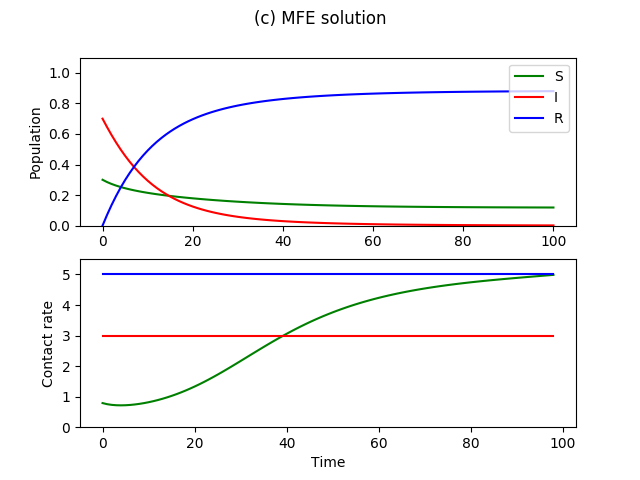} & \includegraphics[scale=0.5]{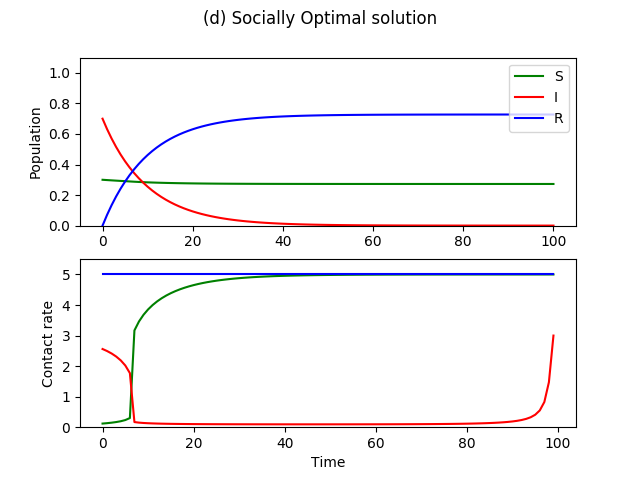} \\
\end{tabular}
\caption{(a) shows the same MFE solution as in Figure \ref{fig:mfe}. (b) shows the socially optimal solution, which maximizes the total utility of the population. (c) and (d) show the MFE solution and socially optimal solution for a different initial population, ($x_S(0),x_I(0))=(0.3,0.7)$.}
\label{fig:mfeso}
\end{figure}
The optimal solution as shown in Figure \ref{fig:mfeso}b is to completely suppress the contacts of the I, so that no additional infection can take place, while the susceptible population resumes normal activities. In fact, this is most often the socially optimal strategy. An exception is if a large part of the population was already infected initially (Figure \ref{fig:mfeso}c,d). Here, 70 percent of the population is initially infected, and so the optimal contact strategy is to completely isolate the susceptible until the number of infected decreases to some level, at which, we go back to complete quarantining of the I and the resuming of normal activities by the S. This type of policy is the complete shutdown implemented to various degrees in the United States, in which it is better to completely insulate the susceptibles until the number of infected population decrease back to a manageable level.  
\subsection{Cost of central planning}
A key piece missing in our characterization of the socially optimal solution is the various costs to controlling the contact rate of the population. The assumption in Section \ref{ssec:so} that the central planner can freely choose contact rates is far from realistic. For example, in the optimal solutions of Figure \ref{fig:mfeso}b, complete suppression of contact rates of the infected must be done through quarantining or providing proper incentives to keep infected people from making social contacts, which all cost resources. Additionally, people do not generally like their choices to be decided by an authority, so this cost also includes the "loss in freedom," which the central planner of wants to minimize. Therefore, we add to the problem in \ref{ssec:so}, a cost for deviating away from the population's MFE. 
\begin{align}
c_S^{opt}(t), c_I^{opt}(t), c_R^{opt}(t) = \argmax_{\bm{c}} \int_0^T & \Big[ x_S(t)u_S(c_S(t)) + x_I(t)u_I(c_I(t)) + x_R(t)u_R(c_R(t)) \nonumber \\
&- \frac{1}{2}k\sum x_z(c_z-c_z^{eq})^2 \Big] dt
\end{align}
We add a new parameter, $k \geq 0$, which balances the competing objectives of maximizing total utility and minimizing the deviations from selfish strategies, similar to a regularization parameter in machine learning. When $k=0$, the solution to the objective function is the socially optimal solution, and when $k$ is large, the solution is the MFE solution. In the intermediate values of $k$, we can find the solution which balances the trade-off. \\
\begin{figure}[h!]
\centering
\begin{tabular}{ccc}
\includegraphics[scale=0.6]{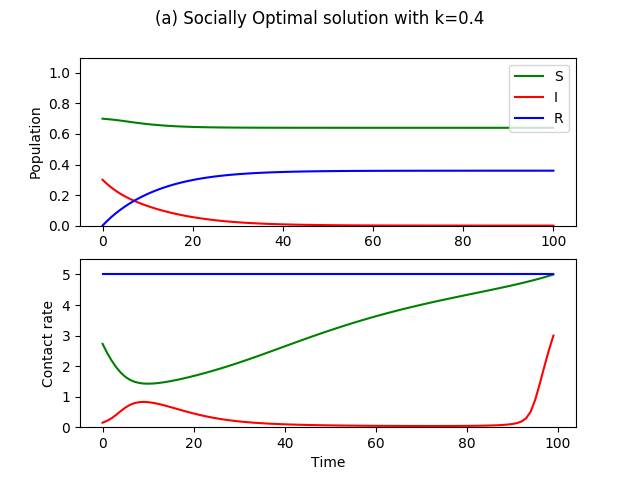} \\ \includegraphics[scale=0.6]{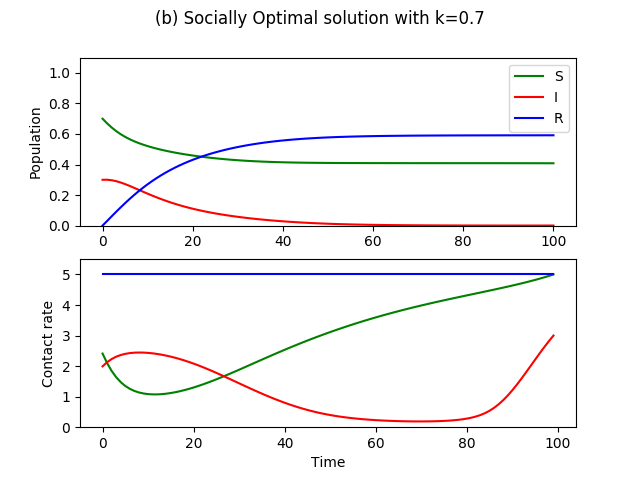} \\
\includegraphics[scale=0.6]{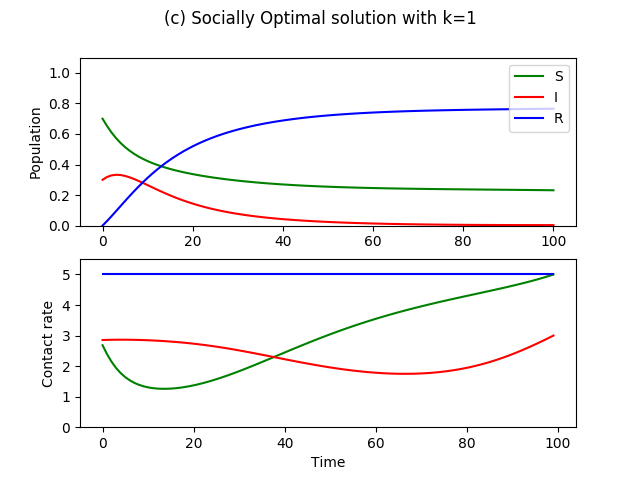} 
\end{tabular}
\caption{With initial condition $x_I(0)=0.3$, we compute the socially optimal contact strategies and corresponding population dynamics for $k=0.4,0.7$, and $1$. As $k$ is increased, it becomes more expensive to shift the infected strategies from the selfish strategy, and so the socially optimal contacts of the infected is larger, as the cost of lowering it begins to outweigh the utility benefits.}
\label{fig:socost}
\end{figure}
\\
As we see in Figure \ref{fig:socost}, for different values of $k$, we compute the socially optimal contact rates of the population. Another interpretation is that given limited resources to control contact rates, we find how it should be distributed during $t \in [0,T]$. We see a common result among the range of values for $k$, which is that when susceptibles have low contact rates, it is less important to keep the contact rates of the infected as low. Instead, it is more optimal use of resources to make sure the contact rates of the infected after the peak of the epidemic is lowered because this is when the susceptibles start ramping up to normal activities, thus posing higher risk of a second epidemic. Another reason is that because the number of infected is smaller at this time and the imposed cost is fixed per capita, the same amount of resources is more efficiently used by controlling the smaller population of infected rather than the larger infected population during the peak.
\subsection{Application to SEIR Model} \label{seir}
Here we show the numerical solutions to the MFE problem when exposed class (E) is added to the model. First, we consider the case in which the exposed individuals are not yet infectious, but incubating the disease. Second, we consider the case in which the exposed individuals are presymptomatic, meaning that they are infectious before they show symptoms, although at a lower rate compared to the infectious individuals. \\
\\
The incubation period is a random variable which has exponential distribution with parameter $\nu$ (The average incubation period is $\nu^{-1}$). Another variable which is introduced is $\beta_E$, the presymptomatic transmission rate of individuals in the exposed class. If there is no presymptomatic transmission, $\beta_E=0$. The SEIR model is
\begin{align}[left=\empheqlbrace] 
\Dot{x}_S &= -C(\cdot)\beta x_Sx_I - C(\cdot)\beta_E x_Sx_E \nonumber\\
\Dot{x}_E &= C(\cdot)\beta x_S x_I + C(\cdot)\beta_E x_Sx_E - \nu x_E \nonumber\\
\Dot{x}_I &= \nu x_E - \mu x_I \nonumber\\
\Dot{x}_R &= \mu x_I \label{eq:1}
\end{align}
Because the susceptible and the exposed classes do not have visible symptoms, an individual cannot differentiate between being in the S or E class. Therefore, we define the asymptomatic class, which includes both the susceptible class and the exposed class. The value function of this class will be taken as the weighted average between the S and E classes,
\begin{equation}
V_A = \frac{x_S V_S + x_E V_E}{x_S + x_E}
\end{equation}
where the corresponding value functions are given as
\begin{align}
V_S(t) &= \max_{c_S} \Big\{  \int_{t}^{t+dt} u_S(c_S) dt + \Big(1 - C(\cdot)\beta x_I dt - C(\cdot)\beta_E x_Edt\Big)V_S(t+dt)\\
&\hspace{25mm}+\Big(C(\cdot)\beta x_Idt + C(\cdot)\beta_E x_Edt \Big)V_E(t+dt) \Big\} \nonumber\\
V_E(t) &= \max_{c_E} \Big\{  \int_{t}^{t+dt} u_S(c_E) dt + (1 - \nu dt)V_E(t+dt) + \nu dt V_I(t+dt) \Big\} \nonumber\\
V_I(t) &= \max_{c_I} \Big\{  \int_{t}^{t+dt} u_I(c_I) dt + (1-\mu dt)V_I(t+dt) + \mu dt V_R(t+dt) \Big\} \nonumber\\
V_R(t) &= \max_{c_R} \Big\{  \int_{t}^{t+dt} u_R(c_R) dt + V_R(t+dt) \Big\} \label{eq:3}
\end{align}
Therefore the optimal contact rates chosen by the asymptomatic class is
\begin{equation}
V_A(t) = \max_{c_A} \Big\{ \frac{x_S L_S(c_A) + x_E L_E(c_A)}{x_S + x_E} \Big\}
\end{equation}
where $L_S(\cdot)$ and $L_E(\cdot)$ are the objective functions for the $S$ and $E$ classes in equations \ref{eq:3}.\\
\\
\begin{figure}[h!]
\centering
\begin{tabular}{cc}
\includegraphics[scale=0.5]{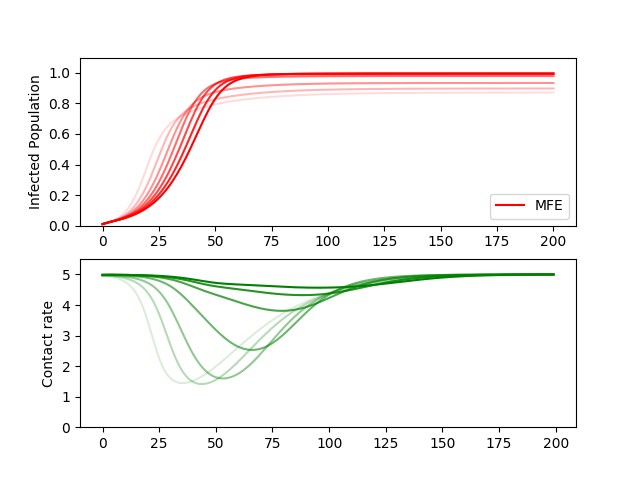} & \includegraphics[scale=0.5]{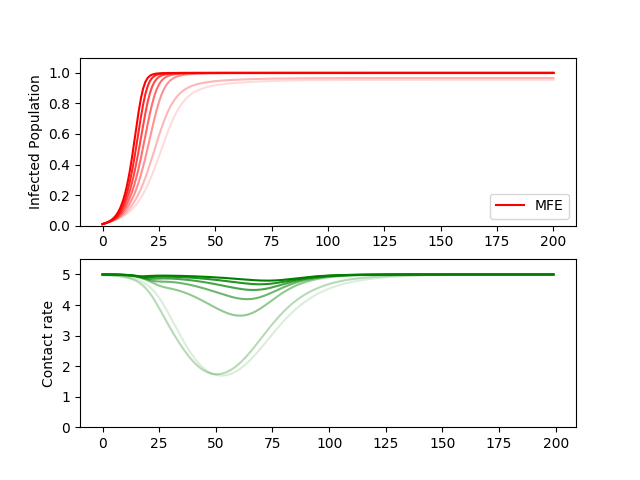} \\
(a) $\nu^{-1} \in [2, 15]$ & (b) $\beta_E \in [\frac{\beta}{16}, \frac{\beta}{2}]$
\end{tabular}
\caption{(a) The average incubation time period, $\nu^{-1}$ is varied from 2 to 15 days, and the corresponding infection curve and the optimal contact decisions of the susceptibles are shown from lighter to darker curves. (b) The presymptomatic transmission rate as a fraction of $\beta$, the regular symptomatic transmission rate is varied from $\frac{1}{16}$ to $\frac{1}{2}$ and the corresponding infection curve and the optimal contact decisions of the susceptibles are shown from lighter to darker curves.} 
\label{fig:exposed}
\end{figure}
With the parameters from Figure \ref{fig:mfe}, we include an exposed class and vary the incubation parameter $\nu$, where $\nu^{-1}$ is the average incubation period (Figure \ref{fig:exposed}a) or the presymptomatic transmission rate $\beta_E$ (Figure \ref{fig:exposed}b. \\
\\
As seen in Figure \ref{fig:exposed}a, if there is an epidemic of an infectious disease with longer incubation period, it becomes optimal for susceptible individuals to not engage in social distancing. This is because larger incubation period suggests higher uncertainty of whether one was already exposed or not. The infection curve also reaches a higher final size because of more contact behavior, although at a slower rate because of the incubation period. \\
\\
Additionally, if the infectious disease causes presymptomatic transmission, the resulting infection curve and optimal contact decisions of the susceptibles can be seen in Figure \ref{fig:exposed}b. If $\beta_E$, the presymptomatic transmission rate, is large, it quickly promotes more behavior because individuals of the exposed class unknowingly make social contacts, and quickly infect many susceptibles. \\
\\
Here, we presented preliminary results of this model when an exposed class is included. Further studies are needed such as computing the socially optimal solution with and without cost, as we did for the SIR model. 
\subsection{Price of Anarchy}
We compute the price of anarchy (PoA), which is a measure of how much the system degrades due to the selfish strategies of individuals in each S, I, R compartment. In the context of mean-field games, it is the ratio of the total utility of the population adopting MFE strategies to the total utility of the population adopting socially optimal strategies\cite{carmona2019price}. The PoA is given by
\begin{equation}
\text{PoA} = \frac{V_{opt}(0)}{x_S(0)V_S(0) + x_I(0)V_I(0) + x_R(0)V_R(0)}
\end{equation}
where $V_{opt}$ is the maximum of the objective function in (\ref{eq:19}). 
\begin{figure}[h!]
\centering
\begin{tabular}{cc}
\includegraphics[scale=0.5]{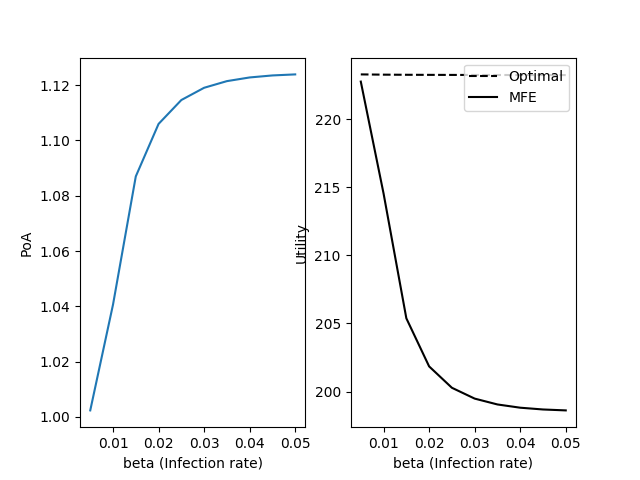} & \includegraphics[scale=0.5]{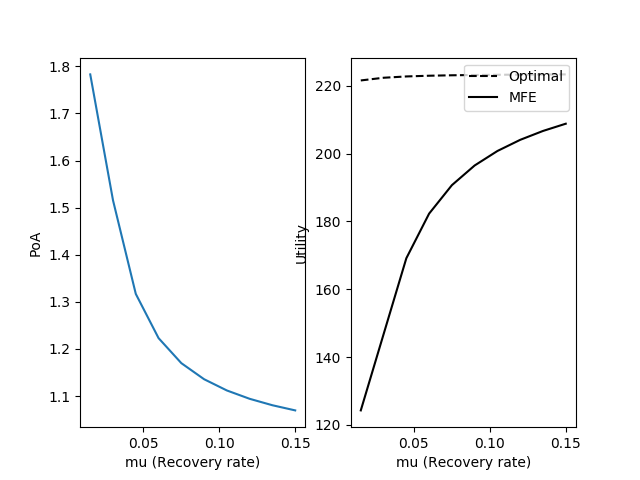} \\
(a) PoA of $\beta \in [0.005,0.05]$ & (b) PoA of $\mu \in [0.015, 0.15]$ \\
\includegraphics[scale=0.5]{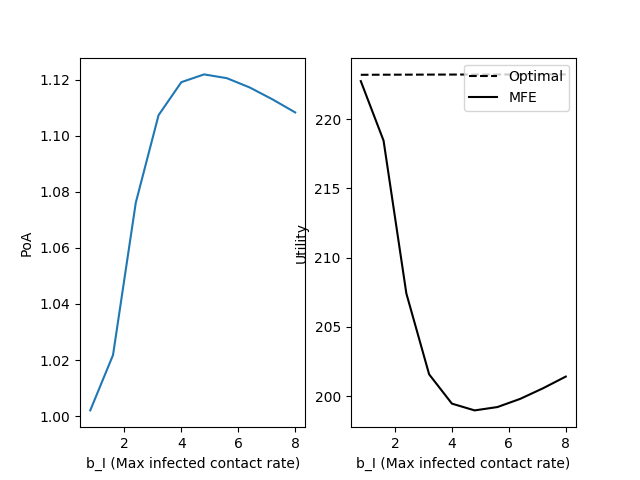} & \includegraphics[scale=0.5]{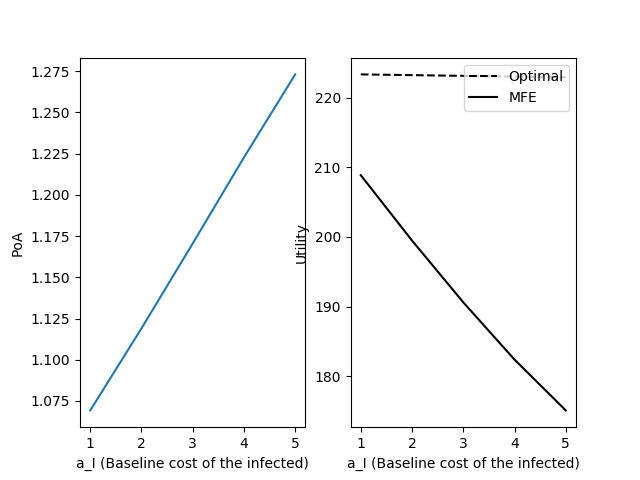} \\
(c) PoA of $b_I \in [1,8]$ & (d) PoA of $a_I \in [1, 5]$
\end{tabular}
\caption{We vary the model parameters and compute the price of anarchy (left) and the total population utility of the socially optimal strategy (right, dashed) and the MFE strategy (right, solid). The parameter space where PoA is high is where intervening public policy would be most needed, since this is the case in which selfish strategies are most degrading the total utility.}
\label{fig:poa}
\end{figure}
The PoA, computed for ranges of $\beta, \mu, b_I,$ and $a_I$, is shown in Fig. \ref{fig:poa}. While one parameter is changed, the others were kept fixed with the parameter values from Fig. \ref{fig:mfe}. \\
\\
As $\beta$ is made larger, PoA increases because each selfish contact behavior of the infected individual becomes magnified by the high transmission rate per contact (Fig. \ref{fig:poa}a). We see a similarly increasing trend as $\mu$ gets smaller, because it increases the time spent infected (Fig. \ref{fig:poa}b). However, an increase in $\beta$  results in decreasing marginal gain in PoA while a decrease in $\mu$ results in increasing marginal gain in PoA. Even for unreasonably large $\beta$, the overall utility suffers by around 11\%, while small $\mu$ can cause a decline of 45\%. The PoA is affected more by changes in $\mu$, the recovery rate of the disease.\\
\\
Fig. \ref{fig:poa}c shows that an intermediate value of $b_I$ results in the largest PoA. This non-monotonic relationship is because of the trade-offs of large $b_I$. On one hand, large $b_I$ means that the disease does not affect the day-to-day productivity of the infected as severely, and so the infected are not as penalized. On the other hand, this also means that the infected are able to be more active and make more contacts, which infects more susceptibles. These two opposing effects are balanced near $b_I=5$, where the PoA is at its maximum. When $a_I$ is large (Fig. \ref{fig:poa}d), the baseline cost of getting infected is larger, which results in less utility at each time point. 
\section{Discussion}
\subsection{Possible additions}
The public policy response to COVID-19 is an extremely complex problem with many factors which this paper has not covered. This model and analysis are, in many ways, the simplest baseline case from which we can make more realistic to fit a particular disease. First, we can pose the problem with different compartmental models. For example, if we take the SIS model, we would see different optimal strategies since the infected also face the burden of social distancing. In the case of COVID-19, it will be most useful to include a compartment with asymptomatic transmission, which behaves like S from the central planner's perspective. The socially optimal strategies depended on being able to distinguish between the S and the I, when it is not always the case. The lack of available testing of COVID-19, for example,  provides the uncertainty within the population as well as from the central planner's perspective. Second, we can add heterogeneity to the population by including additional state variables such as age, socioeconomic status, or level of prosociality. By explicitly adding the different subpopulations, we can understand the game theoretic dilemma at play between the old vs. the young, the financially stable vs. the unstable, or the prosocial vs. antisocial. Then, more specific policies may be proposed that target the contact strategies of a particular group.
\subsection{Conclusions from this model}
While our analysis does not include many important factors, we can still make some general conclusions.
\subsubsection*{Selfish strategies still "flatten the curve."}
By including adaptive behavior of individuals, our model predicts epidemic curves with flatter growth rate, compared to the classical counterparts. In our simple model, the curve is flattened naturally because of susceptibles who weigh the trade-off between current utility of making contacts and the future cost of getting infected. The recovered and the infected do not have any trade-offs to decrease contacts. Even if only the susceptibles are practicing social distancing, it still decreases the number of contacts of the system, and so the infected population reaches a smaller peak (Fig. \ref{fig:mfe}, \ref{fig:fs}). It should be emphasized that the curve is flattened because individuals anticipate future growth in infections and decrease their contacts to avoid being exposed to the infected individuals. If the possible outbreak is flat-out denied by the media, then individuals will not adapt their behaviors, causing an unmitigated large peak in the infected population. Therefore, it is the responsibility of the policymakers to clearly communicate the existence and extent of the spreading disease.
\subsubsection*{Selfish strategy of the infected is never socially optimal.}
We prove that the socially optimal strategy of the infected, $c_I^{opt}$, is always less than $c_I^{eq}$. Therefore, policies in response to the epidemic should decrease the contact rates of the infected. We see examples of policies with this aim such as quarantining the infected or granting paid sick leave to individuals who tested positive. Both policies respectively decrease the contact rates of the infected directly or indirectly by decreasing $b_I$, reducing the potential gain in utility of making more contacts. Because reducing the contacts of the infected is so important, policymakers might consider even more aggressive policies. 
\subsubsection*{It is important to control the infected contact rates, following the peak of the epidemic.}
If cost is imposed to the central planner in changing the contact rates of the individuals, we find the new socially optimal contact rates, depending on $k$, which is the per capita unit cost of changing contact rates. For $k=0$ and $k \gg 1$ respectively, we find the cost-imposed socially optimal solution to be the previously computed $\bm{c}^{opt}$ and $\bm{c}^{eq}$. For $k$ values in between (Fig. \ref{fig:poa}), we commonly see that when it is too costly to decrease the infected contact rate for the whole time period, it is most beneficial to at least focus on decreasing after the epidemic has subsided. An assumption here is that cost of central planning is constant in time, when it may not be in real world situations. When outbreak is at its peak, more public attention is on the disease, and it may be easier to implement social distancing or secure funding for quarantining. However, when the disease has subsided, it might be harder to convince the public to behave differently.\\
\\
This result reinforces the need for formal social distancing policy which goes beyond the peak of the epidemic. When the disease is prevalent, social distancing can be naturally favored due to individual optimization, but to sustain it for longer requires centralized public policy to prevent second peaks. This general result is in agreement with other studies of COVID-19 policies which mention the likely possibility of second peaks\cite{di2020timing,morris2020optimal,ferguson2020report}. Additional work, using more realistic central planning cost depending on time and population structure, will help us better understand how such long-term social distancing policies should be implemented.
\subsubsection*{Disease with exposed period is unfavorable for promoting social distancing.}
As the preliminary results show in Section \ref{seir}, if an epidemic of an infectious disease with an incubation period breaks out in a population, it will be harder to maintain social distancing behaviors in the population. In the case of COVID-19, which is believed to have an incubation period of up to two weeks before symptoms, each asymptomatic individual will be more uncertain concerning its infection status, and thus make it harder to prolong social distancing. Additionally, it is believed that those exposed to COVID-19 can transmit the disease before showing symptoms, which will further make it hard to maintain social distancing, as seen in Figure \ref{fig:exposed}b.
\subsubsection*{Policies are most needed for diseases with low $\mu$, high $a_I$, high $\beta$ and intermediate $b_I$.}
By computing the price of anarchy, we can measure the effect of different parameters on how much the system is degraded by the selfish behaviors. Diseases with low $\mu$, high $a_I$, high $\beta$, intermediate $b_I$, in this order, seem to most affect the population such that their selfish behaviors will degrade the system more compared to central intervention.\\
\\
An interesting future work will be to put different diseases on the spectrum of these 4 variables, depending on its epidemiological characteristics as well as its economic and health effects on the infected. Then, we can roughly categorize diseases which need to be centrally intervened in the case of an outbreak.

\bibliographystyle{unsrt}
\bibliography{sir_mfg.bib}
\end{document}